\newcolumntype{P}[1]{>{\centering\arraybackslash}p{#1}}
\newtheorem{theorem}{Theorem}
\newcommand\numberthis{\addtocounter{equation}{1}\tag{\theequation}}
\begin{document}

\title{Bayesian AirComp with Sign-Alignment Precoding 
for Wireless Federated Learning
}

\author{Chanho Park, Seunghoon Lee, and Namyoon Lee
\thanks{ C. Park, S. Lee, and N. Lee are with the Department of Electrical Engineering, POSTECH, Pohang, Gyeongbuk, 37673 Korea (e-mail: \{chanho26, shlee14, nylee\}@postech.ac.kr).

This work was supported by Institute of Information \& communications Technology Planning \& Evaluation(IITP) grant funded by the Korea government(MSIT). (No.2021-0-00467, Intelligent 6G Wireless Access System)}} 

\maketitle

\begin{abstract} 
In this paper, we consider the problem of wireless federated learning based on sign stochastic gradient descent (signSGD) algorithm via a multiple access channel. When sending locally computed gradient's sign information, each mobile device requires to apply precoding to circumvent wireless fading effects. In practice, however, acquiring perfect knowledge of channel state information (CSI) at all mobile devices is infeasible. In this paper, we present a simple yet effective precoding method with limited channel knowledge, called sign-alignment precoding. The idea of sign-alignment precoding is to protect sign-flipping errors from wireless fadings. Under the Gaussian prior assumption on the local gradients, we also derive the mean squared error (MSE)-optimal aggregation function called Bayesian over-the-air computation (BayAirComp).  Our key finding is that one-bit precoding with BayAirComp aggregation can provide a better learning performance than the existing precoding method even using perfect CSI with AirComp aggregation.

\end{abstract}


\section{Introduction}
Federated learning (FL) is a class of distributed machine learning technique using locally generated heterogenous datasets at mobile devices. Communicating between mobile devices and a central server, it can train a model accurately, while maintaining the privacy of data present in mobile devices \cite{mcmahan2017communication, konevcny2016federated}. Federated averaging (FedAvg) and federated stochastic gradient descent (FedSGD) are the representative algorithms for FL. In FedSGD, mobile devices send locally computed gradient information to the server, and the server aggregates the local gradients to update the global model parameters. To improve learning efficiency for FL, the variations of FedAvg and FedSGD have been proposed in \cite{alistarh2017qsgd, reisizadeh2020fedpaq, chen2016revisiting, li2019convergence}. 



Over-the-air computation (AirComp)-based FL has been recently proposed as a communication-bandwidth efficient aggregation method \cite{zhu2019broadband, zhu2020one, amiri2020machine, amiri2020federated, seif2020wireless, yang2020federated}. Using the superposition property of wireless medium,  AirComp performs wireless analog aggregation of the local gradients on the fly. This approach can attain low-latency learning performance compared with the case of using orthogonal access techniques when implementing FedSGD in a wireless setup. In addition,  AirComp enhances the security of individual data because it makes difficult to estimate individual local gradient information.  In AirComp, precoding for aligning the local gradients is essential to circumvent heterogenous channel fading effects across mobile devices \cite{zhu2019broadband, zhu2020one, amiri2020machine, amiri2020federated, seif2020wireless, yang2020federated}. Several precoding strategies have been presented, including truncated-channel inversion precoding \cite{zhu2019broadband} and dithering-based precoding \cite{shlezinger2020federated}. The underlying idea of the precoding strategies is to perform pre-equalization to mitigate fading effects; thereby, the server can receive a superposition of aligned local gradients on the fly. These alignment precoding methods, however, can be challenging to implement in wireless FL systems. Mobile devices located far from the server may be infeasible to consistently apply channel-inversion-based precoding, while satisfying the power constraint. Besides, it is challenging to acquire perfect channel state information (CSI) for uplink communications in practice. 

In this paper, we consider a sign stochastic gradient descent (signSGD) algorithm \cite{bernstein2018signsgd} for FL over a shared wireless multiple access channel. Sign-SGD is a communication-efficient distributed learning algorithm. This algorithm can reduce the communication cost because it exploits the sign information of local gradients when updating the model. Besides, it can be implementable using simple binary digital modulated transmission techniques in wireless FL settings \cite{zhu2020one}. Each mobile device performs one-bit quantization of the locally computed gradient in every communication round to minimize uplink communication cost. Then, it transmits the sign of local gradient along with precoding to mitigate channel fadings using a shared time-frequency resource. Then, the server receives a superposition of precoded local gradient signs. Using this received gradient information, the server updates the model parameters and shares them with the mobile devices for the next round iteration.

Our main contribution is to propose novel precoding called sign-alignment precoding. The idea of our precoding is to align the sign of the channel fading coefficient to avoid gradients' sign flipping errors by fadings. This precoding requires one-bit CSI at transmitter (CSIT) information; thereby, it can significantly reduce the channel acquisition and feedback overheads for wireless FL compared to the conventional FL algorithm using channel-inversion based precoding, which requires full CSIT at mobile devices. We also present a novel Bayesian aggregation method for AirComp, referred to as BayAirComp. Inspired by our prior work in \cite{lee2020bayesian}, the key idea of BayAirComp is to map the received signal to the estimate of the sum of local gradients to minimize the mean squared error (MSE) by harnessing the knowledge of prior distributions of local gradients as side-information. We present experimental results to show that sign-alignment precoding with BayAirComp can outperform the state-of-the-art one-bit broadband digital aggregation (OBDA) algorithm \cite{zhu2020one}.

\section{System Model}
In this section, we describe learning and communication models for a wireless FL system.  The wireless FL system consists of $K$ mobile devices and a server (or a base station) as depicted in Fig. \ref{Fig1}. The server trains a neural network with a large number of hyper-parameters ${\bf w}\in \mathbb{R}^{M}$ by communicating with $K$ mobile devices through a shared wireless channel. 



\subsection{Loss function}
Let $\mathbf{z}_k^i \in \mathbb{R}^{d}$ and $r_{k}^{i} \in \mathbb{R}$ be the $i$th pair of the training data example stored at mobile device $k\in[K]$. Assuming, device $k$ has $N_{k}$ training examples, we define a set of training examples stored at device $k\in [K]$ as $\mathcal{D}_{k}=\{\mathbf{z}_{k}^{i}, r_{k}^{i}\}_{i=1}^{N_{k}}$. Therefore, a total number of training examples for learning becomes $N=\sum_{k=1}^{K} N_{k}$.  Given model parameter ${\bf w}\in \mathbb{R}^{M}$, we define a loss function with training pair $(\mathbf{z}_k^i, r_k^i)$ as $\ell\left(\mathbf{z}_k^i, r_k^i; {\bf w}\right):\mathbb{R}^M\times \mathbb{R}\rightarrow \mathbb{R}$. This loss can be either a cross entropy or a mean-squared error function according to machine learning applications. Using the sample average, the local loss function of device $k$ is defined as
\begin{align} \label{localloss}
    f_{k}\left( \mathbf{w} \right) = \frac{1}{N_{k}} \sum_{i=1}^{N_{k}} \ell\left( \mathbf{z}_k^i, r_k^i; \mathbf{w} \right).
\end{align} 
Summing $f_{k}\left( \mathbf{w} \right)$ with weight $\frac{N_k}{N}$ for $k\in [K]$, the global loss function is given by
\begin{align}
    F\left( \mathbf{w} \right) = \frac{1}{N} \sum_{k=1}^{K} \sum_{i=1}^{N_k} \ell\left( \mathbf{z}_k^i, r_k^i; \mathbf{w} \right) = \sum_{k=1}^{K} \frac{N_k}{N} f_k\left( \mathbf{w} \right).
\end{align}

\subsection{Wireless federated learning via singSGD}
The wireless FL system iteratively optimizes model parameter ${\bf w}$ over $T$ communication rounds. Each communication round comprises 1) gradient computation and compression, 2) uplink transmissions, 3) model update, and 4) downlink transmission.

{\bf 1) Gradient computation and compression:} In communication round $t\in [T]$, mobile device $k\in[K]$ first computes local gradient information. Let $\mathbf{g}_k^t \triangleq \nabla f_k \left( \mathbf{w}^{t} \right)$ be the local gradient evaluated using model knowledge ${\bf w}^{t}$ and local data set $\mathcal{D}_k$ for $k\in [K]$. Then, it compresses local gradient using one-bit quantizer to diminish the uplink communication cost as
\begin{align}
	 \mathbf{\hat{g}}_k^t = {\sf sign} \left( \mathbf{g}_k^t \right),
\end{align}
where ${\sf sign}(x)=1$ for $x\geq 0$ and ${\sf sign}(x)=-1$ otherwise.

\begin{figure}[t]
    \centering 
    \includegraphics[width=1\columnwidth]{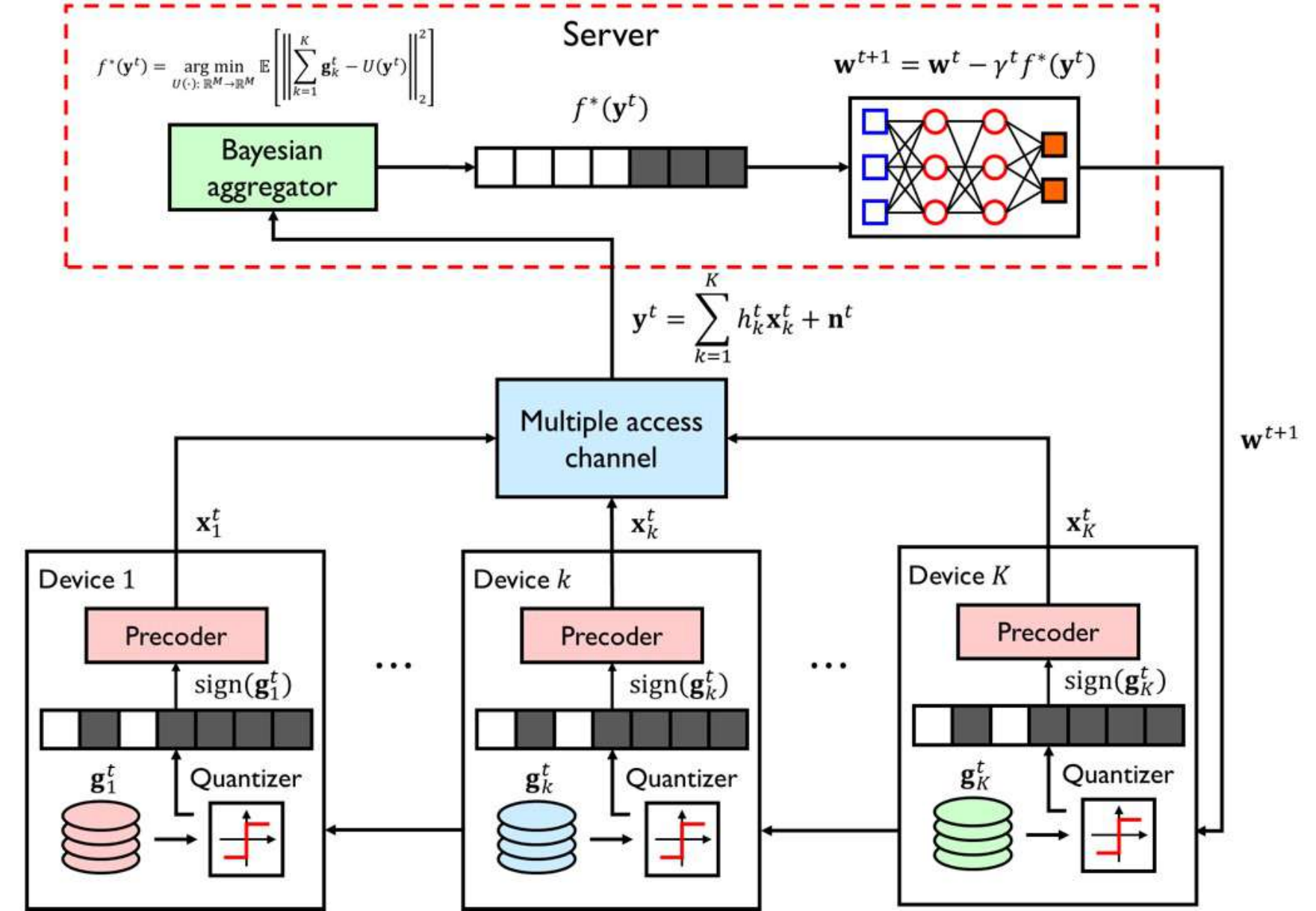}
    \vspace{-0.3cm}\caption{Illustration of the wireless signSGD framework, in which mobile devices jointly train a model over a shared multiple access channel.}\label{Fig1}
\end{figure}

{\bf 2) Uplink communications with precoding via a MAC:} After the compression, mobile device $k$ transmits binary vector $\mathbf{\hat{g}}_k^t$ along with the precoding coefficient $v_k^t\in \mathbb{R}$. 
\begin{align}
	{\bf x}_k^t=v_{k}^t{\sf sign} \left( \mathbf{g}_k^t \right),
\end{align}
where the precoder $v_k^t$ satisfies the average power constraint as 
\begin{align}
    \mathbb{E} \left[ \left| v_{k}^t \right|^2 \right] \le P.
\end{align}
This compressed and precoded gradient information is sent over a shared multiple access channel.  We focus on the real part of the complex-baseband signal model for ease of exposition. Let  $h_k^t\in \mathbb{R}$ be the real part of the complex baseband channel fading coefficient from mobile device $k$ to the server at communication round $t$.  We consider a block fading channel model, in which $h_k^t$ independently changes over different communication rounds, while it remains as a constant per communication round. Then, under the premise of perfect synchronization, the received signal is given by
\begin{align} \label{Aircomp}
    {\bf y}^t &= \sum_{k=1}^K h_{k}^t{\bf x}_k^t + {\bf n}^t, 
\end{align}
where ${\bf n}^t$ is the real part of the complex-baseband noise signal at the server, which is distributed as independent and identically distributed (IID) Gaussian, i.e., ${\bf n}^t\sim \mathcal{N} \left(0,\frac{1}{2}{\bf I}_M\right)$.  

{\bf 3) Model update:} In communication round $t\in [T]$, the server performs the update of the model parameter using the gradient descent algorithm \cite{mcmahan2017communication}. To perform the gradient decent algorithm, the server requires to estimate the sum of local gradients from the received signal ${\bf y}^t$, which is a noisy version of the sum of faded local gradients. Let $U(\cdot):\mathbb{R}^M\rightarrow \mathbb{R}^M$ be the sum gradient estimator. Then, the MSE-optimal gradient estimator is defined as
\begin{align}
	 f^{\star}({\bf y}^t) =\arg\min_{ U(\cdot):\mathbb{R}^M\rightarrow\mathbb{R}^M}\mathbb{E}\left[ \left\|  \sum_{k=1}^K{\bf g}_k^t - U({\bf y}^t)\right\|_2^2\right].
\end{align}
Using this MSE-optimal estimator, the server updates the model parameter with learning rate $\gamma^t\in \mathbb{R}^{+}$ at communication round $t$ for the next round iteration:
\begin{align} \label{GD}
    \mathbf{w}^{t+1} = \mathbf{w}^{t} - \gamma^t  f^{\star}({\bf y}^t).
\end{align}
 
\begin{figure*}[h]
\begin{align} 
    f_{\sf{BayAirComp}} \left( y_m^t \right) &=
    \frac{1}{K} \sum_{k=1}^K \left[ \mu_k^t + 
    \sqrt{\frac{2}{\pi}} \nu_k^t 
    \frac{\sum_{\mathbf{b} \in \mathcal{B}_{K,k}} \exp \left[-\frac{\left( y_m^t - \left( \mathbf{h}^t \right)^T \mathbf{b} \right)^2}{2\sigma^2} \right] - \sum_{\mathbf{b} \in \left( \mathcal{B}_{K,k} \right)^{c}} \exp \left[-\frac{\left( y_m^t -(\mathbf{h}^t)^{\top} \mathbf{b} \right)^2}{2\sigma^2} \right] }{\sum_{\mathbf{b} \in \mathcal{B}_K} \exp \left[-\frac{\left( y_m^t - \left( \mathbf{h}^t \right)^T \mathbf{b} \right)^2}{2\sigma^2} \right] }
    \right]. 
    \numberthis{\label{MMSEaggregate}}
\end{align}
\hrulefill
\end{figure*} 

{\bf 4) Downlink communication:} Using the broadcast nature of the wireless medium, the server multicasts the updated model parameter $\mathbf{w}^{t+1}$ to mobile devices using a shared downlink channel.  We assume that all mobile devices can perfectly decode the updated model parameters over entire communication rounds for ease of exposition.

\section{BayAirComp with Sign-Alignment Precoding}
In this section, we present a novel wireless federated learning algorithm. The key idea of the proposed algorithm entails two operations: 1) sign-alignment precoding in the uplink transmission and 2) the Bayesian AirComp aggregation in the reception.

%

\subsection{Local Gradient Parameter Estimation and Compression} \label{graddis}

 
To implement the Bayesian aggregation in \cite{lee2020bayesian}, the server requires to know the prior distribution of local gradients ${\bf g}_{k}^t \in \mathbb{R}^M$. Unfortunately, it is infeasible to characterize the exact prior distribution of ${\bf g}_{k}^t$ because it depends on both the local data distribution and deep neural network structures. Instead, we model the prior distribution of ${\bf g}_{k}^t$ as Gaussian with the moment matching technique \cite{lee2020bayesian}. Specifically, let $g_{k,m}^t$ be the $m$th entry of ${\bf g}_{k}^t$. We model that  $g_{k,m}^t$ follows IID Gaussian with mean $\mu_k^t$ and variance $\left( \nu_k^t \right)^2$, i.e., $g_{k,m}^t \sim \mathcal{N}(\mu_k^t, \left( \nu_k^t \right)^2 )$. The mean and variance are estimated by taking the sample average estimator as
\begin{align}
     \mu_k^t = \frac{1}{M}\sum_{m=1}^M g_{k,m}^t~~{\rm and}~~    \left( \nu_k^t \right)^2 = \frac{1}{M}\sum_{m=1}^M \left( g_{k,m}^t -\mu_k^t \right)^2.
\end{align}
  Although this Gaussian approximation on the prior distribution is not exact, it not only allows the Bayesian aggregation computationally tractable but also improves learning performance when training CNNs using MNIST datasets \cite{lee2020bayesian} in an orthogonalized multiple access channel environment. After computing the moments,  each device normalizes the local gradient by subtracting its mean:
 \begin{align}
    \bar{g}_{k,m}^t = g_{k,m}^t - \mu_k^t.
\end{align}
Afterwards, mobile device $k$ compresses the local gradient to diminish the uplink communication cost by using one-bit quantizer as
\begin{align}
	\mathbf{\hat{g}}_k^t = \text{sign} \left( \mathbf{\bar{g}}_k^t \right).
\end{align}
Mobile device $k\in [K]$ sends $\mathbf{\hat{g}}_k^t \in \{-1,+1\}^M$ with $\mu_k^t\in \mathbb{R}^{+}$ and $\nu_k^t\in \mathbb{R}^{+}$ to the server.

\subsection{Sign-alignment precoding}
When sending compressed local gradient, $\mathbf{\hat{g}}_k^t$, device $k$ requires to use precoder to compensate for the effect of wireless fading $h_k^t$. Unlike the prior approaches to invert the fading coefficient for precoding, we take a novel precoding strategy that requires one-bit CSI feedback from the BS. Our proposed precoding strategy is to \textit{align the signs} of local gradients using \textit{one-bit precoding}, i.e., $v_k^t={\sf sign}(h_k^t)$ as
\begin{align}
	{\bf x}_k = {\sf sign}(h_k^t){\bf \hat g}_k^t.
\end{align}
The received signal at the BS becomes
\begin{align} \label{Aircomp2}
   {\bf y}^t  & = \sum_{k=1}^K |h_{k}^t|{\bf \hat g}_k^t+ {\bf n}^t.
\end{align}
This precoding strategy ensures to align the signs of local gradients. This sign alignment effect helps to estimate the sum-gradient accurately by avoiding the sign flipping errors due to the wireless channel fadings. In addition, this precoding strategy requires only one-bit CSI overhead compared to the conventional precoding system which needs $6\sim12$ bits for full CSI. Therefore, high efficiency can be obtained through this precoding strategy.

\begin{figure*}[t]
    \centering 
    \includegraphics[width=2\columnwidth]{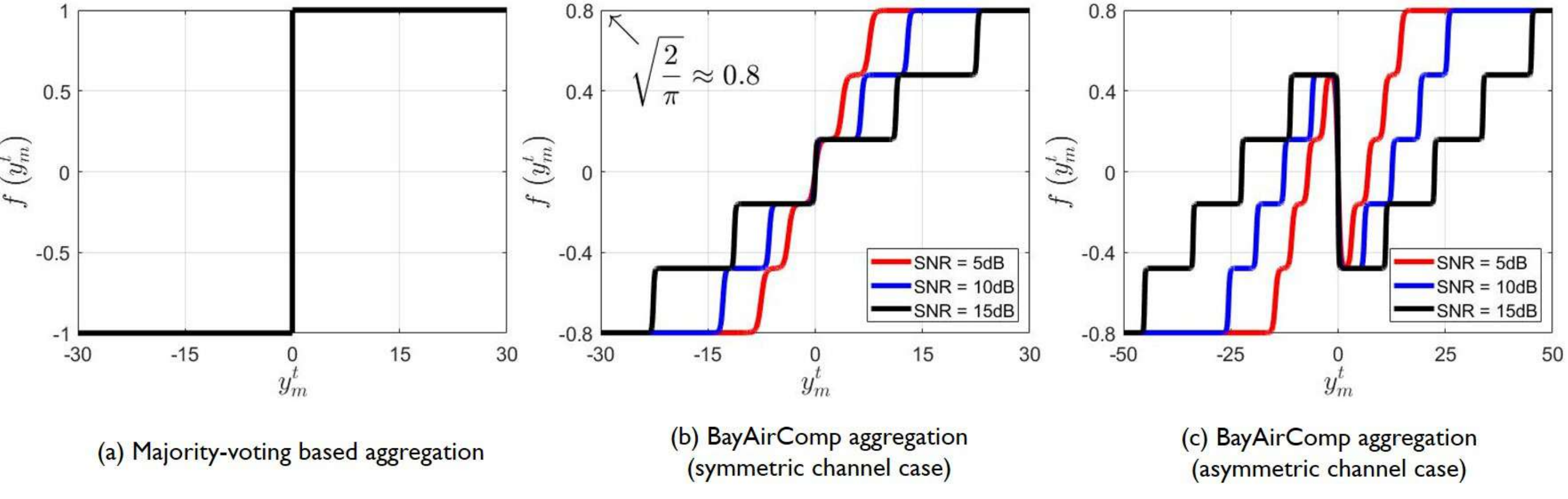}
    \vspace{-0.3cm}\caption{Comparison of aggregation functions according to fading coefficients and SNRs.}\label{Fig2}
\end{figure*}


\subsection{Bayesian AirComp}
Using received signal ${\bf y}^t$, the BS requires estimating the sum of local gradients $\sum_{k=1}^K{\bf g}_k^t$ to accomplish the model update via a stochastic gradient descent algorithm. We present a novel aggregation method called BayAirComp.  The key idea of BayAirComp is to estimate the sum of local gradients $\sum_{k=1}^K{\bf g}_k^t$ by jointly exploiting the knowledge of the prior distribution of ${\bf g}_{k}^t$ (i.e., $\mu_k^t$ and $\nu_k^t$), one-bit quantizer, and fading coefficient $h_k^t$ to minimize the MSE. The aggregation function for BayAirComp $f_{\sf{BayAirComp}}(y_m^t):\mathbb{R}\rightarrow\mathbb{R}$ aims to minimize MSE, i.e., 
\begin{align}
	f_{\sf{BayAirComp}}(y_m^t)=\arg\min_{f:\mathbb{R}\rightarrow\mathbb{R}}\mathbb{E} \left[ \left| f(y_m^t) - \sum_{k=1}^K g_{k,m}^t \right|^2 \right].
\end{align}
The following theorem suggests the aggregation function for BayAirComp in closed form.

\begin{theorem}
	Let $\mathbf{h}^t = \left[ |h_{1}^t|, |h_{2}^t|, \cdots, |h_{K}^t| \right]\in \mathbb{R}_{+}^K$ be the sign-aligned channel vector at communication round $t$. We also denote $\mathcal{B}_K=\{-1,1\}^K$ be a set with $2^K$ elements of binary vectors with length $K$, i.e., ${\bf b}\in \mathcal{B}_K$. We also define a subset $\mathcal{B}_{K,k}\subset \mathcal{B}_{K}$ that contains binary vectors whose $k$th component is fixed to one. Then, the BayAirComp aggregation function is given by \eqref{MMSEaggregate}.

\end{theorem}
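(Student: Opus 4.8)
The plan is to recognize that the minimizer of the mean squared error is the conditional expectation, so that
$f_{\sf{BayAirComp}}(y_m^t) = \mathbb{E}\!\left[\sum_{k=1}^K g_{k,m}^t \,\middle|\, y_m^t\right]$,
and then to evaluate this posterior mean in closed form. First I would exploit the additive decomposition $g_{k,m}^t = \mu_k^t + \bar g_{k,m}^t$ from the compression stage, where $\bar g_{k,m}^t \sim \mathcal{N}(0,(\nu_k^t)^2)$ is the centered local gradient. By linearity of expectation this splits the estimator into the deterministic term $\sum_k \mu_k^t$ and the data-dependent term $\sum_k \mathbb{E}[\bar g_{k,m}^t \mid y_m^t]$, so the whole problem reduces to computing $\mathbb{E}[\bar g_{k,m}^t \mid y_m^t]$ for each device $k$.

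Next I would introduce the latent sign vector $\mathbf{s} = (\hat g_{1,m}^t,\dots,\hat g_{K,m}^t)$ as the bridge between the continuous gradients and the observation. Since, under the sign-alignment precoding \eqref{Aircomp2}, the received sample is $y_m^t = (\mathbf{h}^t)^{\top}\mathbf{s} + n_m^t$, the observation depends on the gradients \emph{only} through their signs. I would therefore apply the tower rule, conditioning on $\mathbf{s}=\mathbf{b}$ over all $\mathbf{b}\in\mathcal{B}_K$, and use the key conditional-independence fact that, given $\hat g_{k,m}^t = b_k$, the magnitude $|\bar g_{k,m}^t|$ is independent of $y_m^t$ and of the other devices' gradients. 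This decouples the inner expectation from the posterior weighting, giving $\mathbb{E}[\bar g_{k,m}^t\mid y_m^t] = \sum_{\mathbf{b}\in\mathcal{B}_K} \mathbb{E}[\bar g_{k,m}^t \mid \hat g_{k,m}^t = b_k]\,\Pr(\mathbf{s}=\mathbf{b}\mid y_m^t)$.

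The two remaining pieces are then routine. For the inner term, $\mathbb{E}[\bar g_{k,m}^t \mid \hat g_{k,m}^t = +1]$ is the mean of a half-normal variable, equal to $\nu_k^t\sqrt{2/\pi}$, so in general $\mathbb{E}[\bar g_{k,m}^t \mid \hat g_{k,m}^t = b_k] = b_k\,\nu_k^t\sqrt{2/\pi}$. For the posterior I would apply Bayes' rule with the Gaussian likelihood $p(y_m^t\mid\mathbf{s}=\mathbf{b}) \propto \exp[-(y_m^t-(\mathbf{h}^t)^{\top}\mathbf{b})^2/2\sigma^2]$ and the uniform prior $\Pr(\mathbf{s}=\mathbf{b})=2^{-K}$ (each centered Gaussian is equally likely to be positive or negative); the uniform prior cancels in the normalization, producing the softmax-style denominator $\sum_{\mathbf{b}\in\mathcal{B}_K}\exp[\cdots]$. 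Substituting both factors and splitting the signed sum $\sum_{\mathbf{b}} b_k \exp[\cdots]$ into the $b_k=+1$ part, indexed by $\mathcal{B}_{K,k}$, and the $b_k=-1$ part, indexed by $(\mathcal{B}_{K,k})^c$, yields the bracketed expression in \eqref{MMSEaggregate}.

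I expect the main obstacle to be justifying the conditional-independence step cleanly, namely that conditioning on $y_m^t$ adds no information about $\bar g_{k,m}^t$ beyond what its own sign already carries. This rests on independence of the gradient entries across devices together with the observation model depending on the gradients purely through $\mathbf{s}$; once this is in place, the factorization of the inner expectation from the posterior is immediate and the rest is bookkeeping. I would also keep careful track of the leading normalization, since the conditional-mean computation most naturally produces a sum $\sum_k[\cdots]$ over devices, so reconciling this with the $\tfrac{1}{K}$ prefactor in \eqref{MMSEaggregate} amounts to pinning down the exact target quantity being estimated.
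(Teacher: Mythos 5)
Your proposal is correct and reaches exactly the bracketed expression in \eqref{MMSEaggregate}, but it organizes the computation differently from the paper. The paper proceeds by continuous integration: it writes $\mathbb{E}\left[\bar{g}_{k,m}^t \mid y_m^t\right]$ as a Bayes-rule ratio of integrals over $\bar{g}_{k,m}^t$, obtains the likelihood $P(y_m^t\mid \bar{g}_{k,m}^t)$ by marginalizing the Gaussian channel likelihood over the other devices' gradients, and then evaluates the numerator and denominator integrals directly (equations \eqref{numerator} and \eqref{denominator}), with the half-normal factor $\sqrt{2/\pi}\,\nu_k^t$ and the sums over $\mathcal{B}_{K,k}$ and its complement emerging from those integrals. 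You instead introduce the sign vector $\mathbf{s}$ as a discrete latent variable, invoke the tower rule, and use the conditional-independence fact that $y_m^t$ depends on the gradients only through their signs, so that $\mathbb{E}\left[\bar{g}_{k,m}^t \mid y_m^t\right]$ factors into the half-normal mean $b_k\,\nu_k^t\sqrt{2/\pi}$ times the discrete posterior $\Pr(\mathbf{s}=\mathbf{b}\mid y_m^t)$ with uniform prior $2^{-K}$. The two routes perform the same underlying calculation, but yours makes the structure explicit and avoids unevaluated integrals, which is arguably cleaner; the paper's route is more mechanical but states its two integral identities without detail. Your conditional-independence step is sound: sign and magnitude of a centered Gaussian are independent, and given $\mathbf{s}=\mathbf{b}$ the observation is a function of $\mathbf{b}$ and the noise alone. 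Finally, your closing remark about the $\tfrac{1}{K}$ prefactor identifies a real inconsistency in the paper: the stated objective in Section III.C is the \emph{sum} $\sum_{k} g_{k,m}^t$, and the paper's own proof (equation \eqref{MMSE estimator}) likewise targets the sum, yet \eqref{MMSEaggregate} carries a $\tfrac{1}{K}$ factor consistent with estimating the \emph{average} gradient $\mathbf{g}_{\mathsf{true}}^t = \tfrac{1}{K}\sum_k \mathbf{g}_k^t$ used later in the convergence analysis; the paper never reconciles this, so your caution here is warranted rather than a gap in your argument.
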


 
\begin{proof}  From the Bayesian principle, the MSE-optimal aggregation function is obtained by computing the conditional expectation as
\begin{align*}
    \arg \underset{f:\mathbb{R}\rightarrow\mathbb{R}}{\min} \mathbb{E} \left[ \left| f \left( y_m^t \right) - \sum_{k=1}^K \bar{g}_{k,m}^t \right|^2 \right] 
    &= \mathbb{E} \left[ \left. \sum_{k=1}^K \bar{g}_{k,m}^t \right| y_m^t \right] \\
    &= \sum_{k=1}^K \mathbb{E} \left[ \bar{g}_{k,m}^t \middle \vert y_m^t \right], \numberthis{\label{MMSE estimator}}
\end{align*} 
where the last equality is the linearity of the expectation. The conditional expectation in \eqref{MMSE estimator} is computed as
\begin{align} \label{MMSE element}
    \mathbb{E} \left[ \left. \bar{g}_{k,m}^t \right| y_m^t\right] = \frac{\int_{-\infty}^{\infty} \bar{g}_{k,m}^t P \left( \left. y_m^t \right| \bar{g}_{k,m}^t \right) P \left( \bar{g}_{k,m}^t \right)\, d\bar{g}_{k,m}^t}{\int_{-\infty}^{\infty} P \left( \left. y_m^t \right| \bar{g}_{k,m}^t \right) P \left( \bar{g}_{k,m}^t \right)\, d\bar{g}_{k,m}^t}.
\end{align}
We define ${\bf \tilde g}_{m}^t=[\bar{g}_{1,m}^t, \bar{g}_{2,m}^t, \cdots, \bar{g}_{K,m}^t]$. Applying the one-bit precoding for the sign alignment, the channel likelihood distribution is 
\begin{align}
    P \! \left( \left. y_m^t \right| { g}_{m,1}^t,\ldots,  { g}_{m,K}^t \right) =  \frac{1}{\sqrt{2\pi\sigma^2}} e^{  -\frac{\left( y_m^t -  \sum_{k=1}^K |h_{k}^t| {\sf sign}\left(\bar{g}_{k,m}^t\right)  \right)^2}{2\sigma^2}}. \label{eq:likelihood}
\end{align}
\noindent To obtain $P \! \left( \left. y_m^t \right| {\bar g}_{m,k}^t \right)$, we need to marginalize \eqref{eq:likelihood} with respect to ${\bar g}_{m,1}^t, \ldots, {\bar g}_{m,k-1}^t,{\bar g}_{m,k+1}^t,\ldots, {\bar g}_{m,K}^t $, where ${\bar g}_{k,m}^t \sim \mathcal{N} \left( 0, \left( \nu_k^t \right)^2 \right)$. Then, the marginal distribution $P \! \left( \left. y_m^t \right| { g}_{m,k}^t \right)$, we compute the numerator in \eqref{MMSE element} as 

\begin{align*}
  & \int_{-\infty}^{\infty} \bar{g}_{k,m}^tP \left( \left. y_m^t \right| \bar{g}_{k,m}^t \right) P \left( \bar{g}_{k,m}^t \right)\, d\bar{g}_{k,m}^t \\
    &= \frac{1}{2^{K-1}\sqrt{2\pi\sigma^2}}\sqrt{\frac{2}{\pi}} \nu_k^t\left[ \sum_{\mathbf{b} \in \mathcal{B}_{K,k}} \exp \left\{ - \frac{\left( y_m^t - (\mathbf{h}^t)^{\top} \mathbf{b} \right)^2}{2\sigma_z^2} \right\} \right. \\
    &~~~~~~~~~~~~~~~~~~~~~~~~~ \left. - \sum_{\mathbf{b} \in   \mathcal{B}_{K,k}^c} \exp \left\{ - \frac{\left( y_m^t - (\mathbf{h}^t)^{\top} \mathbf{b} \right)^2}{2\sigma_z^2} \right\} \right],
    \numberthis{\label{numerator}}
\end{align*}
and
\begin{align*}
&  \int_{-\infty}^{\infty} P \left( \left. y_m^t \right| \bar{g}_{k,m}^t \right) P \left( \bar{g}_{k,m}^t \right)\, d\bar{g}_{k,m}^t \\
      &=  \frac{1}{2^{K-1}\sqrt{2\pi\sigma^2}}  \sum_{\mathbf{b} \in \mathcal{B}^K} \exp \left\{ - \frac{\left( y_m^t -(\mathbf{h}^t)^{\top} \mathbf{b} \right)^2}{2\sigma^2} \right\}.
    \numberthis{\label{denominator}}
\end{align*}
The estimated gradient of the $k$th device is derived by substituting \eqref{numerator} and \eqref{denominator} for \eqref{MMSE element}, which arrives at the expression in \eqref{MMSEaggregate}.
\end{proof}

It is instructive to consider special cases for a better understanding of the proposed aggregation function for BayAirComp. 

{\bf Example:} Suppose $K=5$. We first assume that all channel fading coefficients are identical in the magnitude $|h_k^t|=1$ for $k \in [K]$. In this case, as depicted in Fig. \ref{Fig2} (left-side), the aggregation function becomes a uniform soft-step function with maximum and minimum values of $\pm\sqrt{\frac{2}{\pi}}$.  As SNR increases, the soft-step function tends to be sharp. In a heterogeneous fading environment, $|h_1|=5$ and $|h_k|=1$ for $k\in \{2,3,4,5\}$, the proposed aggregation function plays a role of a non-uniform quantizer as illustrated in depicted in Fig. \ref{Fig2} (right-side).  As depicted in Fig. \ref{Fig2}-(a), our proposed BayAirComp clearly differs from the majority-voting based aggregation function. The proposed BayAirComp provides the magnitude information of ${\bar g}_{k,m}^t$ in a quantized manner. Whereas, the majority-voting based aggregation keeps the sign of  ${\bar g}_{k,m}^t$.

{\bf Remark (Implementation):} To implement BayAirComp,  mobile device $k\in [K]$ requires to additionally send  $\mu_k^t\in \mathbb{R}$ and $\nu_k^t\in \mathbb{R}^{+}$ to the server per communication round. As shown in our prior work \cite{lee2020bayesian}, this information can be quantized with $B$-bit scalar quantizer and be transmitted to the server using  orthogonal resources. Since this additional information bits are much smaller than the model size $M\sim 10^{6}$, i.e.,$2B\ll M$, the additional overheads can be negligible.


\section{Performance Analysis} \label{analysis}
In this section, we provide the convergence analysis of the proposed FL algorithm in this paper. The analysis procedure is carried out in two steps. First, obtain the MSE bound between the true gradient and the gradient estimate by the aggregation function, and then show the gradient of the SGD-based FL algorithm converges to zero. The convergence proof is under the assumption that the global loss function $F \left( \mathbf{w} \right)$ is $L$-Lipschitz smooth and has the least value in $\mathbf{w}^*$. For ease of expression, let define $\mathbf{g}_{\mathsf{true}}^t = \nabla F \left( \mathbf{w}^t \right) = \frac{1}{K} \sum_{k=1}^K \mathbf{g}_k^t$.

\subsection{MSE Bound}

\begin{theorem}
    Let $g_{k,m}^t$ be an IID Gaussian with mean $\mu_k^t$ and variance $\left( \nu_k^t \right)^2$, i.e. $g_{k,m}^t \sim \mathcal{N} \left( \mu_k^t, \left( \nu_k^t \right)^2 \right)$, for $k \in \left[ K \right]$ and $m \in \left[ M \right]$. For the error $\mathbf{e}^t = f_{\mathsf{BayAirComp}} \left( \mathbf{y}^t \right) - \mathbf{g}_{\mathsf{true}}^t$ , the MSE bound $\sigma_{\mathsf{MSE}}^2 \geq \mathbb{E} \left[ \lVert \mathbf{e}^t \rVert_2^2 \right]$ can be expressed as
    \begin{align} \label{MSEbound}
        \sigma_{\mathsf{MSE}}^2 = \frac{M}{K^2} \left( 1 + \frac{2}{\pi} \right) \sum_{k=1}^K \left( \nu_k^t \right)^2.
    \end{align}
\end{theorem}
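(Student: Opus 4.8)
The plan is to use the two structural facts supplied by Theorem~1: that $f_{\sf{BayAirComp}}$ is the coordinate-wise conditional-mean (hence MSE-optimal) estimator, and that the Gaussian prior makes the entries $g_{k,m}^t$ IID across $m$. First I would collapse the vector problem to a scalar one by writing $\mathbb{E}[\lVert \mathbf{e}^t\rVert_2^2]=\sum_{m=1}^M \mathbb{E}[(e_m^t)^2]=M\,\mathbb{E}[(e_m^t)^2]$, so that a single coordinate must be bounded. Substituting $g_{k,m}^t=\bar g_{k,m}^t+\mu_k^t$ into both $f_{\sf{BayAirComp}}(y_m^t)$ and $g_{\mathsf{true},m}^t=\tfrac1K\sum_{k=1}^K g_{k,m}^t$, the common means $\mu_k^t$ cancel, leaving the per-coordinate error $e_m^t=\tfrac1K\sum_{k=1}^K\bigl(\hat{\bar g}_{k,m}^t-\bar g_{k,m}^t\bigr)$, where $\hat{\bar g}_{k,m}^t\triangleq\mathbb{E}[\bar g_{k,m}^t\mid y_m^t]$ is exactly the bracketed conditional-expectation term of \eqref{MMSEaggregate}.

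Second, I would bound a single device's contribution, as this is where the constant $1+\tfrac2\pi$ originates. Two ingredients are needed: the prior second moment $\mathbb{E}[(\bar g_{k,m}^t)^2]=(\nu_k^t)^2$, and the uniform bound $\lvert\hat{\bar g}_{k,m}^t\rvert\le\sqrt{2/\pi}\,\nu_k^t$, which holds because the ratio of exponential sums in \eqref{MMSEaggregate} has modulus at most one, giving $\mathbb{E}[(\hat{\bar g}_{k,m}^t)^2]\le\tfrac2\pi(\nu_k^t)^2$. Since $\hat{\bar g}_{k,m}^t$ is a function of $y_m^t$, the tower property yields the orthogonality $\mathbb{E}[\hat{\bar g}_{k,m}^t\,\bar g_{k,m}^t]=\mathbb{E}[(\hat{\bar g}_{k,m}^t)^2]\ge0$, so expanding the square and discarding this non-negative cross term gives $\mathbb{E}[(\hat{\bar g}_{k,m}^t-\bar g_{k,m}^t)^2]\le\mathbb{E}[(\hat{\bar g}_{k,m}^t)^2]+\mathbb{E}[(\bar g_{k,m}^t)^2]\le(1+\tfrac2\pi)(\nu_k^t)^2$.

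The remaining and \emph{main} obstacle is the aggregation over devices: expanding $(e_m^t)^2$ produces $K$ diagonal terms (each controlled above) plus $K(K-1)$ cross terms, and these cannot simply be dropped, because the per-device errors are \emph{not} independent—every $\hat{\bar g}_{k,m}^t$ is driven by the same observation $y_m^t$. I would resolve this by observing that, by linearity of conditional expectation, $\sum_{k=1}^K\hat{\bar g}_{k,m}^t=\mathbb{E}[\,T\mid y_m^t\,]$ is itself the MSE-optimal estimate of $T\triangleq\sum_{k=1}^K\bar g_{k,m}^t$. Writing $S\triangleq\mathbb{E}[T\mid y_m^t]$, the orthogonality principle gives $\mathbb{E}[(S-T)^2]=\mathbb{E}[T^2]-\mathbb{E}[S^2]$, and since $T$ is a zero-mean sum of independent priors, $\mathbb{E}[T^2]=\sum_{k=1}^K(\nu_k^t)^2$. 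Hence $\mathbb{E}[(e_m^t)^2]=\tfrac1{K^2}\bigl(\sum_{k}(\nu_k^t)^2-\mathbb{E}[S^2]\bigr)\le\tfrac1{K^2}\sum_{k}(\nu_k^t)^2\le\tfrac1{K^2}(1+\tfrac2\pi)\sum_{k}(\nu_k^t)^2$, which shows that the correlated cross terms only \emph{reduce} the MSE and therefore never violate the per-device bound. Multiplying by $M$ then yields $\sigma_{\sf MSE}^2=\tfrac{M}{K^2}(1+\tfrac2\pi)\sum_{k=1}^K(\nu_k^t)^2$, completing the argument; this route has the added benefit of certifying that the stated bound is conservative by the margin $\tfrac{M}{K^2}\mathbb{E}[S^2]+\tfrac{2M}{\pi K^2}\sum_k(\nu_k^t)^2$.
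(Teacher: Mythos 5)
Your proof is correct, and it takes a genuinely different route from the paper's at exactly the step that matters. The paper, like your second paragraph, splits the error into per-device terms $e_{k,m}^t=\bar g_{k,m}^t-\sqrt{2/\pi}\,\nu_k^t A_k(y_m^t)$ and bounds each by $\left(1+\tfrac{2}{\pi}\right)\left(\nu_k^t\right)^2$ (prior variance plus worst-case offset, using $\lvert A_k\rvert<1$); but it handles the aggregation over devices by invoking ``convexity'' to pass from $\mathbb{E}\bigl[\lvert\tfrac1K\sum_k e_{k,m}^t\rvert^2\bigr]$ to $\tfrac{1}{K^2}\sum_k\mathbb{E}\bigl[\lvert e_{k,m}^t\rvert^2\bigr]$. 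That step is not Jensen's inequality (Jensen yields the factor $\tfrac1K$, not $\tfrac1{K^2}$); as written it silently discards the cross terms $\mathbb{E}[e_{k,m}^t e_{\ell,m}^t]$, whose signs are not obvious since all per-device errors are driven by the same $y_m^t$. Your third paragraph is precisely the missing justification: identifying $S=\sum_k\hat{\bar g}_{k,m}^t=\mathbb{E}[T\mid y_m^t]$ as the MMSE estimate of $T=\sum_k\bar g_{k,m}^t$ and applying the orthogonality principle gives the exact identity $K^2\,\mathbb{E}\bigl[(e_m^t)^2\bigr]=\mathbb{E}[(S-T)^2]=\mathbb{E}[T^2]-\mathbb{E}[S^2]\le\sum_k\left(\nu_k^t\right)^2$, which shows the cross terms can only reduce the MSE and immediately implies the stated bound. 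So your argument is rigorous where the paper's is loose, and it additionally certifies that the bound holds with margin: the constant $1+\tfrac{2}{\pi}$ is in fact unnecessary, since $\tfrac{M}{K^2}\sum_k\left(\nu_k^t\right)^2$ already upper-bounds $\mathbb{E}\bigl[\lVert\mathbf{e}^t\rVert_2^2\bigr]$. What the paper's route buys is only brevity; yours buys correctness of the cross-term treatment and a strictly tighter constant, at the cost of leaning on the conditional-mean characterization of $f_{\mathsf{BayAirComp}}$ from Theorem 1 (which the paper has already established, so this is legitimate).
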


\begin{proof}
    To reduce the complexity of expressing formulas, we simplify \eqref{MMSEaggregate} as 
    \begin{align}
        f_{\mathsf{BayAirComp}} \left( y_m^t \right) = \frac{1}{K} \sum_{k=1}^K \left[ \mu_k^t + \nu_k^t \sqrt{\frac{2}{\pi}} A_k \! \left( y_m^t \right) \right].
    \end{align}
    Putting the aggregation function into the gradient error definition, we can obtain the formula as
    \begin{align*}
        \frac{1}{K} \sum_{k=1}^K \left[ \mu_k^t + \nu_k^t \sqrt{\frac{2}{\pi}} A_k \! \left( y_m^t \right) \right] &= \frac{1}{K} \sum_{k=1}^K g_{k,m}^t + e_m^t \\
        &= \frac{1}{K} \sum_{k=1}^K \left[ g_{k,m}^t + e_{k,m}^t \right],
        \numberthis{\label{graderror}}
    \end{align*}
    where \eqref{graderror} is for the $m$th component of gradient, and $e_{k,m}^t$ is the error for the $k$th device in $e_m^t$. Then, we compute the MSE bound as follows.
    \begin{align*}
        \mathbb{E} \left[ \lVert \mathbf{e}^t \rVert_2^2 \right] &= \mathbb{E} \left[ \sum_{m=1}^M \lvert e_m^t \rvert^2 \right] \\
        &= \sum_{m=1}^M \mathbb{E} \left[ \lvert e_m^t \rvert^2 \right] \\
        &= \sum_{m=1}^M \mathbb{E} \left[ \left| \frac{1}{K} \sum_{k=1}^K e_{k,m}^t \right|^2 \right] \\
        &\leq \sum_{k=1}^K \mathbb{E} \left[ \frac{1}{K^2} \sum_{k=1}^K \left| e_{k,m}^t \right|^2 \right] \\
        &= \frac{1}{K^2} \sum_{m=1}^M \sum_{k=1}^K \mathbb{E} \left[ \left| e_{k,m}^t \right|^2 \right]. 
        \numberthis{\label{MSEbound1}}
    \end{align*}
    The inequality in \eqref{MSEbound1} is reasonable according to the convexity. Using the assumption that the local gradients $g_{k,m}^t$ and $\bar{g}_{k,m}^t$ is IID Gaussian, it is possible to compute the upper bound of MSE as below. 
    \begin{align*}
        & \mathbb{E} \left[ \lVert \mathbf{e}^t \rVert_2^2 \right] \\
        &\leq \frac{1}{K^2} \sum_{m=1}^M \sum_{k=1}^K \mathbb{E} \left[ \left| g_{k,m}^t - \mu_k^t -\nu_k^t \sqrt{\frac{2}{\pi}} A_k \! \left( y_m^t \right) \right|^2 \right] \\
        &= \frac{1}{K^2} \sum_{m=1}^M \sum_{k=1}^K \mathbb{E} \left[ \left| \bar{g}_{k,m}^t - \nu_k^t \sqrt{\frac{2}{\pi}} A_k \! \left( y_m^t \right) \right|^2 \right] \\
        &= \frac{1}{K^2} \! \sum_{m=1}^M \sum_{k=1}^K \mathbb{E}_{\bar{g}_{k,m}^t} \!\!\! \left[ \left| \bar{g}_{k,m}^t - \nu_k^t \sqrt{\frac{2}{\pi}} \mathbb{E}_{y_m^t} \!\! \left[ A_k \! \left( y_m^t \right) \right] \right|^2 \right] \\
        &\leq \frac{1}{K^2} \sum_{m=1}^M \sum_{k=1}^K \left[ Var\left( g_{k,m}^t \right) + \frac{2}{\pi} \left( \nu_k^t \right)^2 \right] \\
        &= \frac{1}{K^2} \sum_{m=1}^M \sum_{k=1}^K \left( 1 + \frac{2}{\pi} \right) \left( \nu_k^t \right)^2.
        \numberthis{\label{MSEbound2}}
    \end{align*}
    The inequality in \eqref{MSEbound2} is due to the property that
    \begin{align*}
        \mathbb{E}_\mathbf{X} \left[ \left( \mathbf{X} - a \right)^2 \right] &= \sigma_\mathbf{X}^2 + \left( \mu_\mathbf{X} - a \right)^2 \\ 
        &\leq \sigma_\mathbf{X}^2 + \max_{a} \left( \mu_\mathbf{X} - a \right)^2,
        \numberthis{}
    \end{align*}
    and $-1 < A_k \! \left( y_m^t \right) < 1$ regardless of $y_m^t$. Therefore the upper bound of MSE $\sigma_{\mathsf{MSE}}^2$ in \eqref{MSEbound} can be achieved.
\end{proof}

\subsection{Convergence Analysis}

\begin{theorem}
    For the $L$-Lipschitz smooth loss function $F \left( \mathbf{w} \right)$, the proposed FL algorithm with the learning rate $\gamma^t = \frac{\gamma}{t+1}$ for $\gamma > 0$ satisfies 
    \begin{align*} \label{convergencerate}
        & \mathbb{E} \left[ \frac{1}{T} \sum_{t=0}^T \lVert \mathbf{g}_{\mathsf{true}}^t \rVert_2^2 \right] \\
        &\leq \frac{1}{\sqrt{T}} \left[ \frac{F \left( \mathbf{w}^0 \right) - F \left( \mathbf{w}^* \right)}{\gamma \left( 1 - \frac{L\gamma}{2} \right)} + \sigma_{\mathsf{MSE}}^2 \left( 1 + \ln T \right) \frac{\frac{L\gamma}{2}}{1 - \frac{L\gamma}{2}} \right].
        \numberthis{}
    \end{align*}
\end{theorem}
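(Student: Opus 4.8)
The plan is to run the standard non-convex stochastic-gradient descent analysis, feeding in the $L$-smoothness of $F$ together with the per-round error bound $\sigma_{\mathsf{MSE}}^2$ from the previous (MSE-bound) theorem. Writing the update as $\mathbf{w}^{t+1} = \mathbf{w}^t - \gamma^t f_{\mathsf{BayAirComp}}(\mathbf{y}^t)$ with $f_{\mathsf{BayAirComp}}(\mathbf{y}^t) = \mathbf{g}_{\mathsf{true}}^t + \mathbf{e}^t$, I would first invoke the descent lemma implied by $L$-Lipschitz smoothness,
\begin{align*}
F(\mathbf{w}^{t+1}) \leq F(\mathbf{w}^t) + \langle \nabla F(\mathbf{w}^t),\, \mathbf{w}^{t+1}-\mathbf{w}^t\rangle + \frac{L}{2}\lVert \mathbf{w}^{t+1}-\mathbf{w}^t\rVert_2^2,
\end{align*}
substitute the update, and use $\nabla F(\mathbf{w}^t) = \mathbf{g}_{\mathsf{true}}^t$ to expand the inner product and the squared norm in terms of $\mathbf{g}_{\mathsf{true}}^t$ and $\mathbf{e}^t$.

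Next I would take expectation over the channel noise and estimation error at round $t$. The key step is to argue that the cross term $\mathbb{E}[\langle \mathbf{g}_{\mathsf{true}}^t, \mathbf{e}^t\rangle]$ vanishes, using the (conditional) zero-mean property of the Bayesian/MMSE aggregator error; together with $\mathbb{E}[\lVert\mathbf{e}^t\rVert_2^2]\leq \sigma_{\mathsf{MSE}}^2$ this collapses the inequality to
\begin{align*}
\mathbb{E}[F(\mathbf{w}^{t+1})] \leq \mathbb{E}[F(\mathbf{w}^t)] - \gamma^t\Big(1-\tfrac{L\gamma^t}{2}\Big)\mathbb{E}[\lVert \mathbf{g}_{\mathsf{true}}^t\rVert_2^2] + \frac{L(\gamma^t)^2}{2}\sigma_{\mathsf{MSE}}^2 .
\end{align*}

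Since $\gamma^t \leq \gamma^0 = \gamma$ for every $t$, I would then replace the step-dependent coefficient by the uniform lower bound $1-\frac{L\gamma^t}{2}\geq 1-\frac{L\gamma}{2}$ (valid for $\gamma<2/L$), rearrange to isolate $\gamma^t(1-\frac{L\gamma}{2})\mathbb{E}[\lVert\mathbf{g}_{\mathsf{true}}^t\rVert_2^2]$, and sum over $t=0,\dots,T$. The function-value differences telescope to $F(\mathbf{w}^0)-\mathbb{E}[F(\mathbf{w}^{T+1})]\leq F(\mathbf{w}^0)-F(\mathbf{w}^\ast)$, while the noise term accumulates as $\frac{L\sigma_{\mathsf{MSE}}^2}{2}\sum_{t=0}^T (\gamma^t)^2$. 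For a step size decaying like $\Theta(1/\sqrt{t+1})$ this sum is $\Theta\big(\gamma^2(1+\ln T)\big)$ by the harmonic-series estimate, which is exactly what produces the $(1+\ln T)$ factor; lower-bounding $\sum_{t=0}^T \gamma^t \mathbb{E}[\lVert\mathbf{g}_{\mathsf{true}}^t\rVert_2^2]\geq \gamma^T\sum_{t=0}^T \mathbb{E}[\lVert\mathbf{g}_{\mathsf{true}}^t\rVert_2^2]$ by monotonicity of $\gamma^t$ and dividing through by $T\gamma^T(1-\frac{L\gamma}{2})$ with $\gamma^T=\Theta(\gamma/\sqrt{T})$ yields the claimed $\frac{1}{\sqrt{T}}$ prefactor.

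The main obstacle I anticipate is the rigorous justification of the vanishing cross term. The error of a conditional-mean estimator is orthogonal to any function of the received signal $\mathbf{y}^t$, but $\mathbf{g}_{\mathsf{true}}^t$ is not itself measurable with respect to $\mathbf{y}^t$; hence one must lean on unbiasedness of the Bayesian estimator (or an explicit independence assumption on $\mathbf{e}^t$) to discard $\mathbb{E}[\langle \mathbf{g}_{\mathsf{true}}^t, \mathbf{e}^t\rangle]$, since mishandling it would leave a residual $\mathbb{E}[\lVert\mathbf{e}^t\rVert_2^2]$ contribution that changes the constants. A secondary, purely bookkeeping, point is keeping the step-size constants aligned so that the harmonic sum delivers $(1+\ln T)$ and the monotonicity argument delivers exactly the $1/\sqrt{T}$ scaling stated in the theorem.
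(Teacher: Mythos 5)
Your proposal follows the paper's skeleton almost step for step --- the descent lemma from $L$-smoothness, the substitution $f_{\mathsf{BayAirComp}}\left( \mathbf{y}^t \right) = \mathbf{g}_{\mathsf{true}}^t + \mathbf{e}^t$, taking expectations, telescoping, the harmonic-sum bound $\sum_{t=0}^{T-1}\frac{1}{t+1} \le 1+\ln T$, and the lower bound $\sum_{t}\frac{1}{\sqrt{t+1}}\mathbb{E}\left[ \lVert \mathbf{g}_{\mathsf{true}}^t \rVert_2^2 \right] \ge \frac{1}{\sqrt{T}}\sum_{t}\mathbb{E}\left[ \lVert \mathbf{g}_{\mathsf{true}}^t \rVert_2^2 \right]$ --- but the step you yourself flag as the main obstacle is a genuine gap, and the repair you propose does not exist. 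The cross term $\mathbb{E}\left[ \left( \mathbf{g}_{\mathsf{true}}^t \right)^T \mathbf{e}^t \right]$ does \emph{not} vanish. The BayAirComp aggregator is a Bayes (conditional-mean type) estimator under a Gaussian prior: it is a shrinkage estimator, biased toward the prior mean $\frac{1}{K}\sum_k \mu_k^t$, and its error is correlated with the quantity being estimated. The orthogonality principle only makes the error orthogonal to functions of $\mathbf{y}^t$; applied to the signal itself it gives $\mathbb{E}\left[ \langle \mathbf{g}, \mathbf{g}-\hat{\mathbf{g}} \rangle \right] = \mathbb{E}\left[ \lVert \mathbf{g}-\hat{\mathbf{g}} \rVert_2^2 \right] > 0$, not zero. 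Neither unbiasedness nor independence --- the two fixes you suggest --- holds for this estimator, so discarding the cross term as zero is unjustified, and this is not a constant-chasing issue: the term has the magnitude of the gradient variance itself, so if it carried the wrong sign it could destroy the descent inequality altogether.

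The paper handles this point in a fundamentally different way: it states and proves a corollary that \emph{computes} the cross term in closed form in two limiting regimes, under the assumption that the entries of $\mathbf{g}_{\mathsf{true}}^t$ are Gaussian. As $\mathsf{SNR}\rightarrow 0$ the soft-quantizer factors satisfy $A_k\left( y_m^t \right) \rightarrow 0$ (the aggregator returns the prior mean) and the cross term equals $M\left( \nu_{\mathsf{true}}^t \right)^2$; as $\mathsf{SNR}\rightarrow\infty$ one has $A_k\left( y_m^t \right) = \mathsf{sign}\left( \bar{g}_{k,m}^t \right)$ and $\mathbb{E}\left[ \lvert \bar{g}_{k,m}^t \rvert \right] = \sqrt{2/\pi}\,\nu_k^t$, giving $M\left( 1-\frac{2}{\pi} \right)\left( \nu_{\mathsf{true}}^t \right)^2$. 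Both values are strictly positive --- confirming the term is not zero --- and since the cross term enters the expected descent inequality with the negative coefficient $-\left( \gamma^t - L\left( \gamma^t \right)^2 \right)$, this \emph{positivity} (not vanishing) is precisely what licenses dropping the term while preserving the inequality, at least in those SNR regimes. To close your proof you would need to replace your orthogonality/unbiasedness claim with such a sign argument for $\mathbb{E}\left[ \left( \mathbf{g}_{\mathsf{true}}^t \right)^T \mathbf{e}^t \right]$ under the Gaussian-prior model; the rest of your bookkeeping (the $(1+\ln T)$ factor from the harmonic sum and the $1/\sqrt{T}$ prefactor from step-size monotonicity) then matches the paper's derivation.
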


\begin{proof}
    The proposed FL algorithm is based on GD, and the model parameter update formula is given as 
    \begin{align}
        \mathbf{w}^{t+1} = \mathbf{w}^{t} - \gamma^t f_{\mathsf{BayAirComp}} \left( \mathbf{y}^t \right).
    \end{align}
    Since the loss function is $L$-smooth, the convergence formula can be derived as
    \begin{align*}
        & F \left( \mathbf{w}^{t+1} \right) \\
        &\leq F \left( \mathbf{w}^{t} \right) + \left( \mathbf{g}_{\mathsf{true}}^t \right)^T \left( \mathbf{w}^{t+1} - \mathbf{w}^{t} \right) + \frac{L}{2} \lVert \mathbf{w}^{t+1} - \mathbf{w}^{t} \rVert_2^2 \\
        &= F \left( \mathbf{w}^{t} \right) - \left( \mathbf{g}_{\mathsf{true}}^t \right)^T \gamma^t f_{\mathsf{BayAirComp}} \left( \mathbf{y}^t \right) \\
        & \hspace{12em} + \frac{L}{2} \left( \gamma^t \right)^2 \lVert f_{\mathsf{BayAirComp}} \left( \mathbf{y}^t \right) \rVert_2^2 \\
        &= F \left( \mathbf{w}^{t} \right) - \gamma^t \! \left( \mathbf{g}_{\mathsf{true}}^t \right)^T \! \left( \mathbf{g}_{\mathsf{true}}^t + \mathbf{e}^t \right) + \frac{L}{2} \left( \gamma^t \right)^2 \! \lVert \mathbf{g}_{\mathsf{true}}^t + \mathbf{e}^t \rVert_2^2,
        \numberthis{\label{convergence1}}
    \end{align*}
    where the last equation is obtained by the gradient error definition. By taking the expectation in \eqref{convergence1}, we can derive
    \begin{align*} \label{convproblem1}
        & \mathbb{E} \left[ F\left( \mathbf{w}^{t+1} \right) - F\left( \mathbf{w}^{t} \right) \right] \\
        & \leq - \left( \gamma^t - \frac{L}{2} \left( \gamma^t \right)^2 \right) \mathbb{E} \left[ \lVert \mathbf{g}_{\mathsf{true}}^t \rVert_2^2 \right] \\ 
        & \hspace{1.1em} - \left( \gamma^t - L \left( \gamma^t \right)^2 \right) \mathbb{E} \left[ \left( \mathbf{g}_{\mathsf{true}}^t \right)^T \mathbf{e}^t \right] + \frac{L}{2} \left( \gamma^t \right)^2 \mathbb{E} \left[ \lVert \mathbf{e}^t \rVert_2^2 \right].
        \numberthis{}
    \end{align*}
     Usually the learning rate $\gamma^t < 1$, it seems reasonable that $\gamma^t - L \left( \gamma^t \right)^2 > 0$. To continue the convergence analysis, we should find the lower bound of $\mathbb{E} \left[ \left( \mathbf{g}_{\mathsf{true}}^t \right)^T \mathbf{e}^t \right]$.
    
    {\bf Corollary:} If the components of true gradient is Gaussian with mean $\mu_{\mathsf{true}}^t$ and variance $\left( \nu_{\mathsf{true}}^t \right)^2$, i.e. $g_{\mathsf{true}, m}^t \sim \left( \mu_{\mathsf{true}}^t, \left( \nu_{\mathsf{true}}^t \right)^2 \right)$ where $m \in \left[ M \right]$, $\mathbb{E} \left[ \left( \mathbf{g}_{\mathsf{true}}^t \right)^T \mathbf{e}^t \right]$ has a positive value in $\mathsf{SNR} \rightarrow 0$ and $\mathsf{SNR} \rightarrow \infty$.

    
    
    \begin{proof}
        Firstly, we can derive the mean and variance of $g_{\mathsf{true}, m}^t$ as
        \begin{align*}
            &\mu_{\mathsf{true}}^t = \mathbb{E} \left[ g_{\mathsf{true}, m}^t \right] = \mathbb{E} \left[ \frac{1}{K} \sum_{k=1}^K g_{k,m}^t \right] = \frac{1}{K} \sum_{k=1}^K \mu_k^t, \numberthis{} \\
            &\! \left( \nu_{\mathsf{true}}^t \right)^2 = \mathbb{E} \left[ \left( g_{\mathsf{true}, m}^t - \mu_{\mathsf{true}}^t \right)^2 \right] \\
            & \hspace{3.1em} = \mathbb{E} \left[ \left( \frac{1}{K} \sum_{k=1}^K \left( g_{k,m}^t - \mu_k^t \right) \right)^2 \right] \\
            & \hspace{3.2em} = \mathbb{E} \left[ \frac{1}{K^2} \sum_{k=1}^K \left( g_{k,m}^t - \mu_k^t \right)^2 \right] = \frac{1}{K^2} \sum_{k=1}^K \left( \nu_k^t \right)^2. \numberthis{}
        \end{align*}
        
        We can expressed $\mathbb{E} \left[ \left( \mathbf{g}_{\mathsf{true}}^t \right)^T \mathbf{e}^t \right]$ by summation of each component of the vector as below.
        
        \begin{align*} \label{gebound1}
            &\mathbb{E} \left[ \left( \mathbf{g}_{\mathsf{true}}^t \right)^T \mathbf{e}^t \right] \\
            &= \sum_{m=1}^M \mathbb{E} \left[ g_{\mathsf{true}, m}^t \left( g_{\mathsf{true}, m}^t - f_{\mathsf{BayAirComp}} \left( y_m^t \right) \right) \right] \\
            &= \sum_{m=1}^M \left[ \mathbb{E} \left[ \left( g_{\mathsf{true}, m}^t \right)^2 \right] \right. \\
            & \hspace{5em} \left. - \mathbb{E} \left[ g_{\mathsf{true}, m}^t \times \frac{1}{K} \sum_{k=1}^K \left( \mu_k^t + \sqrt{\frac{2}{\pi}} \nu_k^t A_k \left( y_m^t \right) \right) \right] \right] \\
        \end{align*}
        \begin{align*}
            &= \sum_{m=1}^M \left[ \left( \nu_{\mathsf{true}}^t \right)^2 - \sqrt{\frac{2}{\pi}} \left( \frac{1}{K} \sum_{k=1}^K \nu_k^t \mathbb{E} \left[ g_{\mathsf{true}, m}^t A_k \left( y_m^t \right) \right] \right) \right].
            \numberthis{}
        \end{align*}
        We consider about the exact value of $\mathbb{E} \left[ g_{\mathsf{true}, m}^t A_k \left( y_m^t \right) \right]$ in two $\mathsf{SNR}$ cases to obtain $\mathbb{E} \left[ \left( \mathbf{g}_{\mathsf{true}}^t \right)^T \mathbf{e}^t \right]$: $\mathsf{SNR} \rightarrow 0$, and $\mathsf{SNR} \rightarrow \infty$.
        
        {\bf 1) $\mathsf{\bf{SNR}} \rightarrow \bf{0}$ : }
        In this case, the all exponential terms in \eqref{MMSEaggregate} goes to one as the noise variance $\sigma^2$ goes to infinity, hence $A_k \left( y_m^t \right) = 0$. This derives $\mathbb{E} \left[ g_{\mathsf{true},m}^t A_k \left( y_m^t \right) \right] = 0$, so $\mathbb{E} \left[ \left( \mathbf{g}_{\mathsf{true}}^t \right)^T \mathbf{e}^t \right] = M \left( \nu_{\mathsf{true}}^t \right)^2 > 0$ can be achieved in \eqref{gebound1}.
    
        {\bf 2) $\mathsf{\bf{SNR}} \rightarrow \bf{\infty}$ : }
        The additive noise is assumed to be zero, so the only one exponential term where $\mathbf{b}$ got the whole correct signs of users' gradients is non-zero, and the others are zeros in \eqref{MMSEaggregate}. Therefore we can obtain $A_k \left( y_m^t \right) = \mathsf{sign} \left( \bar{g}_{k,m}^t \right)$. Using this, we can represent the value of $\mathbb{E} \left[ g_{\mathsf{true, m}}^t A_k \left( y_m^t \right) \right]$ as 
    
        \begin{align*}
            & \mathbb{E} \left[ g_{\mathsf{true, m}}^t A_k \left( y_m^t \right) \right] \\
            &= \mathbb{E} \left[ \frac{1}{K} \sum_{\ell = 1}^K g_{\ell, m}^t \mathsf{sign} \left( \bar{g}_{k,m}^t \right) \right] \\
            &= \frac{1}{K} \left[ \mathbb{E} \left[ g_{k,m}^t \mathsf{sign} \left( \bar{g}_{k,m}^t \right) \right] + \sum_{\ell \ne k} \mathbb{E} \left[ g_{\ell,m}^t \mathsf{sign} \left( \bar{g}_{k,m}^t \right) \right] \right]. 
            \numberthis{}
        \end{align*}
        We already know that $g_{k,m}^t$ and $g_{\ell,m}^t \left( \ell \ne k \right)$ are independent, so $\bar{g}_{k,m}^t$ and $g_{\ell,m}^t$ are also independent. By the property that $\mathbb{E} \left[ \mathbf{X} \cdot f\left( \mathbf{Y} \right) \right] = \mathbb{E} \left[ \mathbf{X} \right] \cdot \mathbb{E} \left[ f \left( \mathbf{Y} \right) \right]$ where $\mathbf{X}$ and $\mathbf{Y}$ are independent, $\sum_{\ell \ne k} \mathbb{E} \left[ g_{\ell,m}^t \mathsf{sign} \left( \bar{g}_{k,m}^t \right) \right] = 0$ since $\mathbb{E} \left[ \mathsf{sign} \left( \bar{g}_{k,m}^t \right) \right] = 0$. This helps to obtain the value of $\mathbb{E} \left[ g_{\mathsf{true, m}}^t A_k \left( y_m^t \right) \right]$ as
        
        \begin{align*} \label{Abound}
            \mathbb{E} \left[ g_{\mathsf{true, m}}^t A_k \left( y_m^t \right) \right] &= \frac{1}{K} \mathbb{E} \left[ g_{k,m}^t \mathsf{sign} \left( \bar{g}_{k,m}^t \right) \right] \\
            &= \frac{1}{K} \mathbb{E} \left[ \left( \bar{g}_{k,m}^t + \mu_k^t \right) \mathsf{sign} \left( \bar{g}_{k,m}^t \right) \right] \\
            &= \frac{1}{K} \mathbb{E} \left[ \left| \bar{g}_{k,m}^t \right| \right] = \frac{1}{K} \sqrt{\frac{2}{\pi}} \nu_k^t.
            \numberthis{}
        \end{align*} 
        Put this into \eqref{gebound1}, we can get the exact value of $\mathbb{E} \left[ \left( \mathbf{g}_{\mathsf{true}}^t \right)^T \mathbf{e}^t \right]$ as
        
        \begin{align*}
            & \mathbb{E} \left[ \left( \mathbf{g}_{\mathsf{true}}^t \right)^T \mathbf{e}^t \right] \\
            &= \sum_{m=1}^M \left[ \left( \nu_{\mathsf{true}}^t \right)^2 - \sqrt{\frac{2}{\pi}} \left( \frac{1}{K} \sum_{k=1}^K \nu_k^t \cdot \frac{1}{K} \sqrt{\frac{2}{\pi}} \nu_k^t \right) \right] \\
            &= \sum_{m=1}^M \left[ \left( \nu_{\mathsf{true}}^t \right)^2 - \frac{2}{\pi} \left( \frac{1}{K^2} \sum_{k=1}^K \left( \nu_k^t \right)^2 \right) \right] \\
            &= M \left( 1 - \frac{2}{\pi} \right) \left( \nu_{\mathsf{true}}^t \right)^2 > 0.
            \numberthis{}
        \end{align*}
        
        Consequently, we can summarize that 
        \begin{align}
            \mathbb{E} \left[ \left( \mathbf{g}_{\mathsf{true}}^t \right)^T \mathbf{e}^t \right] = \begin{cases}
                M \left( \nu_{\mathsf{true}}^t \right)^2, & \mathsf{SNR} \rightarrow 0 \\
                M \left( 1 - \frac{2}{\pi} \right) \left( \nu_{\mathsf{true}}^t \right)^2, & \mathsf{SNR} \rightarrow \infty
            \end{cases},
        \end{align}
        and all the values are positive. This completes the proof.

    \end{proof}
    
    According to the corollary, \eqref{convproblem1} can be reduced in some particular cases as
    \begin{align*} \label{simpleconv}
        & \mathbb{E} \left[ F \left( \mathbf{w}^{t+1} - F \left( \mathbf{w}^{t} \right) \right) \right] \\
        & \leq - \left( \gamma^t - \frac{L}{2} \left( \gamma^t \right)^2 \right) \mathbb{E} \left[ \lVert \mathbf{g}_{\mathsf{true}}^t \rVert_2^2 \right] + \frac{L}{2} \left( \gamma^t \right)^2 \mathbb{E} \left[ \lVert \mathbf{e}^t \rVert_2^2 \right].
        \numberthis{}
    \end{align*}
    Using the result $\mathbb{E} \left[ \lVert \mathbf{e}^t \rVert_2^2 \right] \leq \sigma_{\mathsf{MSE}}^2$ and the adaptive learning rate $\gamma^t = \frac{\gamma}{t+1} \leq \frac{\gamma}{\sqrt{t+1}}$, we can simplify \eqref{simpleconv} as
    \begin{align*}
        & \mathbb{E} \left[ F \left( \mathbf{w}^{t+1} \right) - F \left( \mathbf{w}^{t} \right) \right] \\
        &\leq - \left( \frac{\gamma}{\sqrt{t+1}} - \frac{L}{2} \frac{\gamma^2}{t+1} \right) \mathbb{E} \left[ \lVert \mathbf{g}_{\mathsf{true}}^t \rVert_2^2 \right] + \frac{L}{2} \frac{\gamma^2}{t+1} \sigma_{\mathsf{MSE}}^2 \\
        &\leq -\frac{\gamma}{\sqrt{t+1}} \mathbb{E} \left[ \lVert \mathbf{g}_{\mathsf{true}}^t \rVert_2^2 \right] \left( 1 - \frac{L\gamma}{2} \right) + \frac{L}{2} \frac{\gamma^2}{t+1} \sigma_{\mathsf{MSE}}^2.
        \numberthis{\label{convergence2}}
    \end{align*}
    If \eqref{convergence2} is summed for all rounds $t \in [T]$, it can be organized as 
    \begin{align*}
        & F \left( \mathbf{w}^0 \right) - F \left( \mathbf{w}^* \right) \\
        &\geq \mathbb{E} \left[ \sum_{t=0}^{T-1} \left( F \left( \mathbf{w}^{t} \right) - F \left( \mathbf{w}^{t+1} \right) \right) \right] \\
        &\geq \sum_{t=0}^{T-1} \left[ \frac{\gamma}{\sqrt{t+1}} \mathbb{E} \left[ \lVert \mathbf{g}_{\mathsf{true}}^t \rVert_2^2 \right] \left( 1 - \frac{L\gamma}{2} \right) - \frac{L}{2} \frac{\gamma^2}{t+1} \sigma_{\mathsf{MSE}}^2 \right] \\
        &\geq \sqrt{T} \gamma \mathbb{E} \left[ \frac{1}{T} \sum_{t=0}^{T-1} \lVert \mathbf{g}_{\mathsf{true}}^t \rVert_2^2 \right] \left( 1 - \frac{L\gamma}{2} \right) - \sum_{t=0}^{T-1} \frac{L}{2} \frac{\gamma^2}{t+1}\sigma_{\mathsf{MSE}}^2 \\
        &\geq \sqrt{T} \gamma \left( 1 - \frac{L\gamma}{2} \right) \mathbb{E} \left[ \frac{1}{T} \sum_{t=0}^{T-1} \lVert \mathbf{g}_{\mathsf{true}}^t \rVert_2^2 \right] \\
        & \hspace{12em} - \left( 1 + \ln{T} \right) \frac{L}{2} \gamma^2 \sigma_{\mathsf{MSE}}^2.
        \numberthis{\label{convergence3}}
    \end{align*}
    The last inequality of \eqref{convergence3} is due to $\sum_{t=0}^{T-1} \frac{1}{t+1} \leq 1 + \ln{T}$. This completes the proof.
\end{proof}

Through the MSE bound and convergence analysis in section \ref{analysis}, we found that the expected value of the gradient norm decreases as the communication round $T$ increases in the order of 
\begin{align} \label{bigOconv}
    \mathcal{O} \left( \frac{c + c' \sigma_{\mathsf{MSE}}^2 \ln{T}}{\sqrt{T}} \right),
\end{align}
for some positive constants $c$ and $c'$. If $\sigma_{\mathsf{MSE}}^2 = 0$, there is no error between the true gradient and gradient estimate and the convergence rate of FL algorithm reduces to $\mathcal{O} \left( \frac{1}{\sqrt{T}} \right)$. Hence the MSE $\sigma_{\mathsf{MSE}}^2$ makes the convergence speed slower. We obtained that the MSE has the constant upper bound, so \eqref{bigOconv} decreases to zero as $T$ goes to infinity because $\lim_{T \rightarrow \infty} \frac{\ln{T}}{\sqrt{T}} = 0$. Finally, we can conclude that the proposed FL algorithm converges to a stationary point as the expected value of gradient goes to zero. Also this analysis can be extended to the algorithm based on SGD using a mini-batch size. 

\section{Simulation Results}
This section provides numerical results to compare the test accuracy of the proposed algorithm and OBDA, a conventional wireless FL scheme. The OBDA system has features of the truncated channel-inversion precoding and majority-voting-based aggregation at the server \cite{zhu2020one}. 


{\bf Network model:} We consider a hundred mobile devices, which are uniformly located in a cell with a radius of 1 km. We consider the COST-231 HATA model to take into account path-loss effects between mobile devices and the server and the Rayleigh fading model for small-scale fading effects. 
\begin{figure}[t] 
    \centering 
    \includegraphics[width=1\columnwidth]{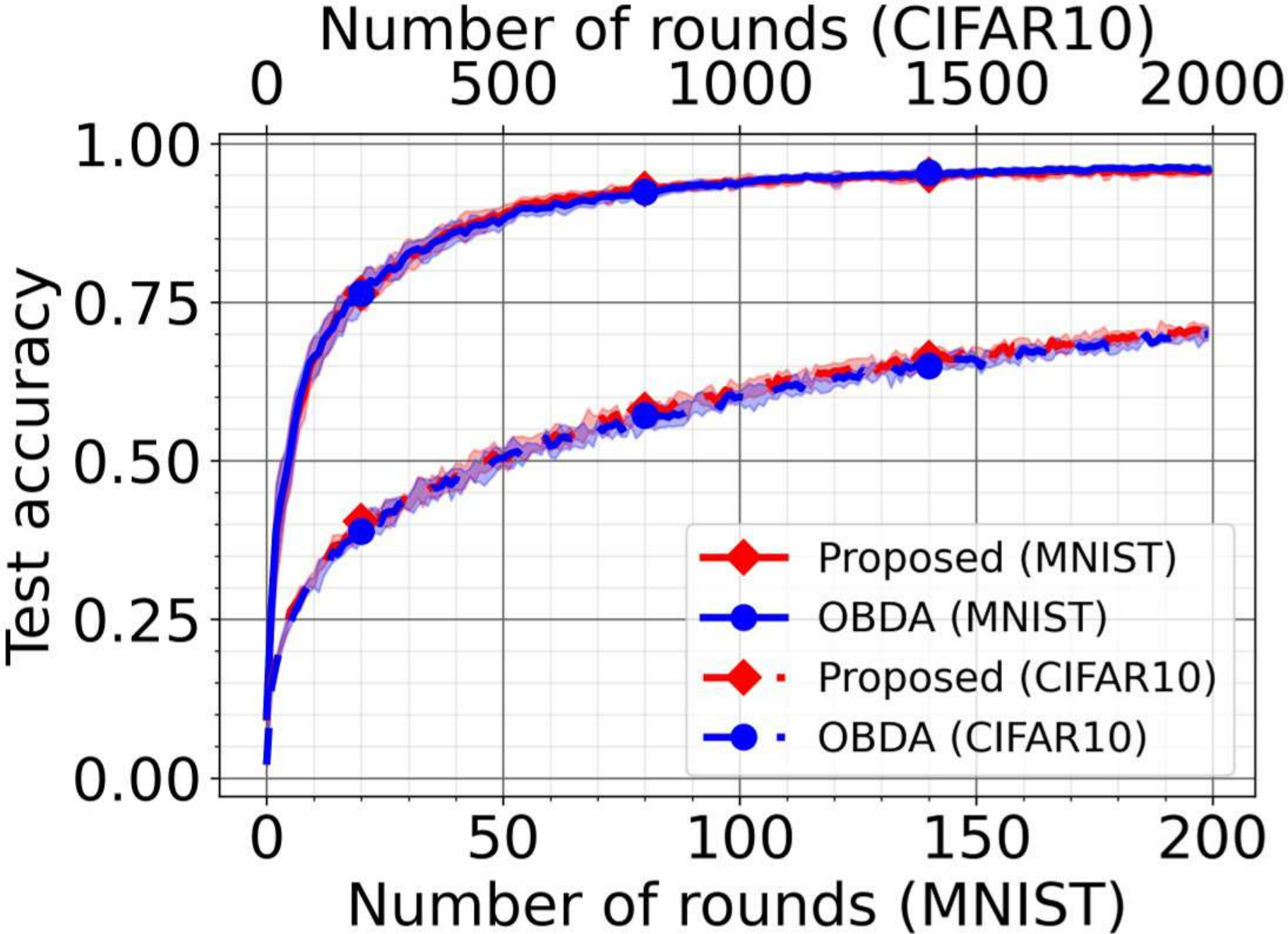}
    \vspace{-0.3cm}\caption{Test accuracy comparison between the proposed system and OBDA system for MNIST and CIFAR10 homogeneous datasets. We use the learning rate of $10^{-3}$ for both algorithms. }\label{homo}
\end{figure}

{\bf Training model:} We consider the task of image classification using MNIST and CIFAR10 datasets. We train a convolutional neural network (CNN)  comprising two $5\times5$ convolutional layers (the first with 32 channels and the second with 64) in \cite{zhu2020one} using  MNIST datasets. We also train ResNet44 model using  CIFAR10 datasets \cite{he2016deep}. To train the model, we assume that the server randomly selects ten mobile users. We also consider two orthogonal time-frequency resources, in which five mobile devices transmit their gradients using a shared time-frequency resource.  For a heterogeneous data assumption, we assign only two distinct types of images to a mobile device. Each mobile device is assumed to compute the local gradient with the same batch size of 32 images. The maximum transmission power is set to be $P=1$.  


{\bf Effect of sign-alignment precoding:} To see the effect of the proposed sign-alignment precoding, we train the models using the majority-voting based aggregation method as in OBDA, while chaining the precoding strategy from the channel-inversion precoding requiring infinite-resolution CSIT to our sign-alignment precoding using one-bit CSIT.  As can be seen in Fig. \ref{homo},  the both algorithms achieve over 95\% and 70\% test accuracies for MNIST and CIFAR10 datasets, respectively. It is remarkable that our sign-alignment precoding using one-bit CSIT is sufficient for wireless FL systems when the applying signSGD optimizer. This result shows that the sign-information for precoding  degrade the learning performance when applying the channel-inversion precoding.

{\bf Effect of BayAirComp with sign-alignment precoding:} For heterogeneous datasets, we train the models using our BayAirComp aggregator with the sign-alignment precoding.  To improve the convergence speed, the server may harness an accelerated gradient descent algorithm by using a momentum term. To be specific, instead of \eqref{GD}, the server can update the model parameter as
\begin{align} \label{AGD}
    \mathbf{w}^{t+1} = \mathbf{w}^{t} - \gamma^t  \left[\delta f^{\star}({\bf y}^{t-1})+ f^{\star}({\bf y}^t)\right],
\end{align} 
where $\delta\in (0,1)$ is a constant for the moment term with initial value of  $f^{\star}({\bf y}^{0})={\bf 0}$.  As shown in Table \ref{hyperparameter}, we can attain the highest accuracy performance for the proposed FL scheme when hyper-parameters are set to be $\gamma^t=10^{-3}$ and $\delta=0.9$. For OBDA, we set the hyper-parameters to be $\gamma^t=10^{-3}$ and $\delta=0$. Fig. \ref{hetero} shows the test accuracy comparison between OBDA and the proposed algorithm. The proposed algorithm achieve 3.0\% and 3.7\% higher test accuracies than those attained by the OBDA for both MNIST and CIFAR10 dataset, respectively. This result demonstrates that BayAirComp aggregator is beneficial to improve the learning performance for heterogeneous datasets.


\begin{table}[h]
\centering
\caption{Test accuracies according to different hyper-parameters}
\begin{tabular}{P{.1\columnwidth} P{.1\columnwidth} P{.2\columnwidth}}
    \toprule
    \multicolumn{2}{c}{Hyperparam.} & Test Accuracy \\
    \cmidrule(){1-2}
    $\gamma$ & $\delta$ & \\
    \midrule
    $10^{-2}$ & 0 & 93.70\% \\
    $10^{-2}$ & 0.9 & - \\
    $10^{-3}$ & 0 & 81.81\% \\
    $10^{-3}$ & 0.9 & 94.63\% \\
    $10^{-4}$ & 0 & 71.32\% \\
    $10^{-4}$ & 0.9 & 71.32\% \\
    \bottomrule
\end{tabular}
\label{hyperparameter}
\end{table}



\begin{figure}[t] 
    \centering 
    \includegraphics[width=1\columnwidth]{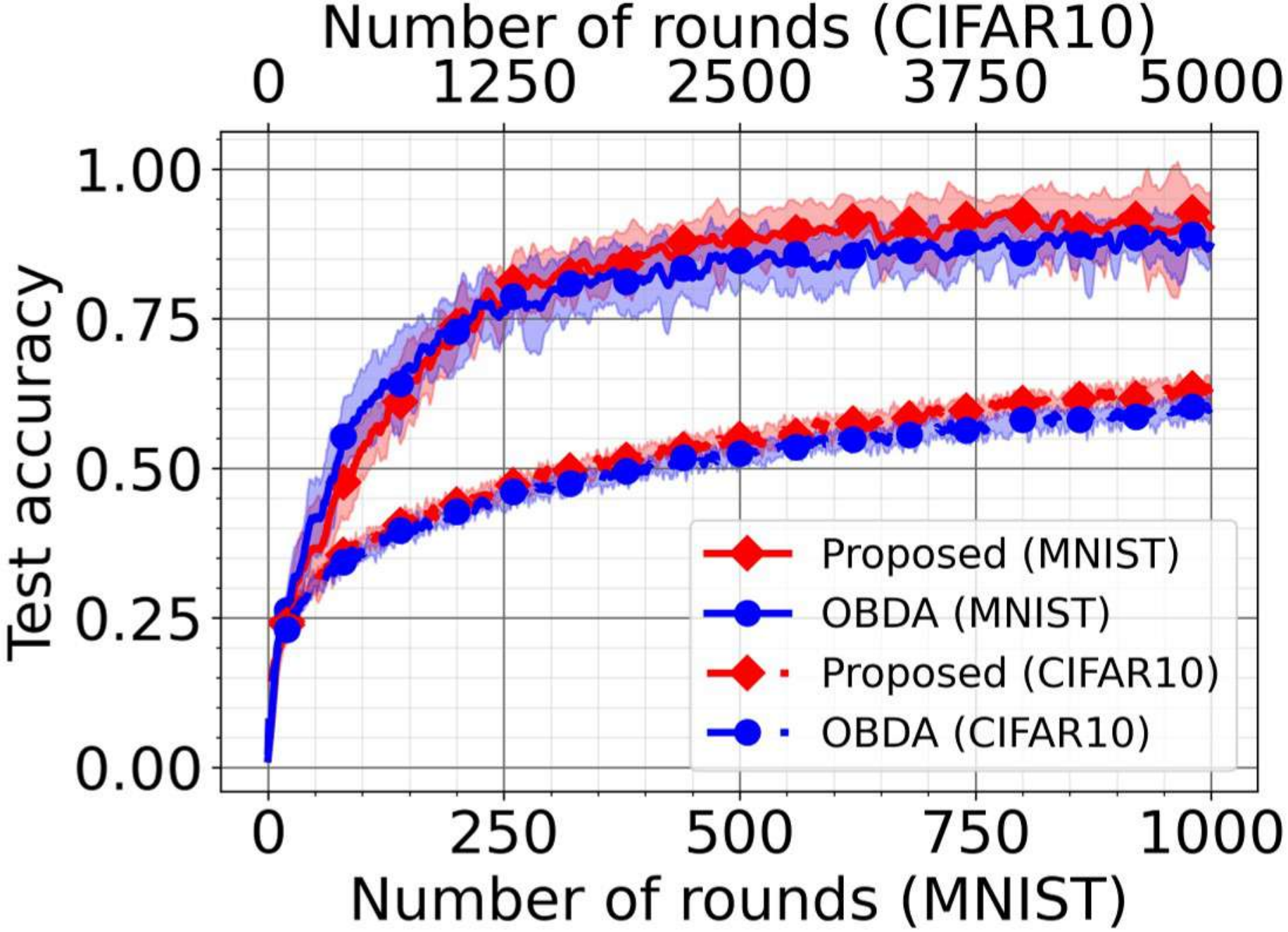}
    \vspace{-0.3cm}\caption{Test accuracy comparison between the proposed and OBDA algorithms for MNIST and CIFAR10 heterogeneous datasets.}\label{hetero}
\end{figure}

\section{Conclusion}
In this work, we studied the problem of wireless federated learning and presented  novel sign-alignment precoding and BayAirComp aggregation method for signSGD. We derived MSE-optimal aggregation function under the IID Gaussian prior of the local gradients when sign-alignment precoding is applied. Our major finding is that one-bit CSIT for precoding  suffices to improve the learning performance compared to the scheme using perfect CSIT. This implies that it is possible to reduce the signaling overheads considerably to implement the wireless FL systems. 



\bibliographystyle{IEEEtran}
\bibliography{main}

\end{document}